\newtheorem{theorem}{Theorem}
\newtheorem{property}{Property}
\newtheorem{corollary}{Corollary}
\newtheorem{lemma}{Lemma}
\newtheorem{definition}{Definition}
\newcommand{\blackslug}{\mbox{\hskip 1pt \vrule width 4pt height 8pt
depth 1.5pt \hskip 1pt}}
\newcommand{\qed}{\quad\blackslug\lower 8.5pt\null\par\noindent}
\newenvironment{proof}{\par\noindent{\bf Proof:}}{\qed \par}
\newcommand{\cF}{\mbox{${\cal F}$}}
\newcommand{\cH}{\mbox{${\cal H}$}}
\newcommand{\cR}{\mbox{${\cal R}$}}
\newcommand{\eqdef}{\stackrel{\rm def}{=}}
\title{Foundations of non-commutative probability theory (Extended abstract)
\thanks{This work was partially supported
by the Jean and Helene Alfassa fund for
research in Artificial Intelligence}
}
\author{Daniel Lehmann\\School of Engineering
and\\Center for the Study of Rationality 
\\Hebrew University, Jerusalem 91904, Israel
}
\date{March 2009}
\begin{document}
\maketitle
\begin{abstract}
Kolmogorov's setting for probability theory is given an original generalization
to account for probabilities arising from Quantum Mechanics.
The sample space has a central role in this presentation
and random variables, i.e., observables, are defined in a natural way.
The mystery presented by the algebraic equations satisfied by
(non-commuting) observables that cannot be observed
in the same states is elucidated.
\end{abstract}

\section{Introduction} \label{sec:intro}
In Quantum Physics a state of a physical system defines random variables
corresponding to {\em observables} that are represented
by Hermitian operators.
These random variables cannot be treated
in the framework, laid down by Kolmogorov in the 30's, which is now
standard in probability theory.
The main reason is that, in the standard treatment,
real random variables are functions from the sample space to
the set of reals, implying that all points of the sample space {\em possess}
values for any random variable, whereas the standard understanding of Quantum
Physics requires that random variables that correspond to non-commuting
operators cannot both have a value at the same time.

This paper proposes a generalization of Kolmogorov's framework that
encompasses the non-commuting probabilities arising from Quantum Physics.
Contrary to previous efforts, known under the general term of Quantum Logic
and which~\cite{DallaChiara:01} surveys in an authoritative way,
in which the sample space is absent, this effort gives a central role to the
sample space.

\section{Kolmogorov's setting} \label{sec:K}
We shall recall the now classical setting laid down by Kolmogorov.
The description below is not the most succinct possible, but the reader
will have no problem showing it is equivalent to his/her favorite presentation.

We start with an arbitrary non-empty set $\Omega$,
the {\em sample space}, whose elements are called {\em points}.

\begin{definition} \label{def:sigma}
A set \mbox{$\cF \subseteq 2^{\Omega}$} of sets of sample points is
a $\sigma$-field iff it satisfies
\begin{enumerate}
\item \mbox{$\emptyset \in \cF$},
\item for any \mbox{$A \in \cF$}, the complement of $A$,
\mbox{$A^{c} = \Omega - A$} is a member of \cF,
\item for any finite or {\em countably infinite} sequence \mbox{$A_{i}$},
\mbox{$i \in I$} of {\em pairwise disjoint} elements
of \cF\ (for any \mbox{$i , j \neq i$}, \mbox{$A_{i} \cap A_{j} = \emptyset$})
their union \mbox{$\bigcup_{i \in I} A_{i}$} is a member of \cF.
\end{enumerate}
The elements of \cF\ are called {\em events}.
\end{definition}

\begin{definition} \label{def:prob}
\begin{sloppypar}
A probability measure (or, distribution) $p$ is a function
\mbox{$p : \cF \longrightarrow [0 , + \infty]$} such that:
\end{sloppypar}
\begin{enumerate}
\item \mbox{$p(\emptyset) = 0$},
\item \mbox{$p(\Omega) = 1$},
\item for any finite or {\em countably infinite} sequence \mbox{$A_{i}$},
\mbox{$i \in I$} of {\em pairwise disjoint} elements
of \cF, we have
\mbox{$p(\bigcup_{i \in I} A_{i}) =$}
\mbox{$\sum_{i \in I} p(A_{i})$}.
\end{enumerate}
\end{definition}

The definition of a random variable is now the following.
\begin{definition} \label{def:rand}
A random variable of a $\sigma$-field \mbox{$\langle \Omega , \cF \rangle$}
to a $\sigma$-field \mbox{$\langle \cR , \Sigma \rangle$}
is a {\em measurable} function \mbox{$X : \Omega \longrightarrow \cR$},
that is to say a function $X$ such that the inverse image by $X$
of any event of $\Sigma$ is an event of \cF.
\end{definition}

\section{Extant work} \label{sec:extant}
The foundation of Quantum Logic was laid by Birkhoff and von Neumann
in~\cite{BirkvonNeu:36} which set the frame for later work in Quantum Logic.
This frame is based on the classical views that quantic propositions are
either true or false, that propositions can be composed using negation,
conjunction and disjunction,  and that the structure to be studied is the
consequence relation: which propositions follow from other propositions or sets
of propositions.
The algebraic structure of such propositions is naturally seen to be an ordered
structure, in fact a lattice. Birkhoff and von Neumann noticed that the
lattice of interest is not, in general, distributive. Quantum Logic therefore
studied non-distributive complemented lattices, satisfying a property weaker
than distributivity: modularity was advocated by~\cite{BirkvonNeu:36} but
most researchers opted for the even weaker orthomodularity.

The probabilistic aspect of Quantum Physics is probably its most
revolutionary feature. There is no doubt that a physicist will consider
the fact that, in Quantum Physics, a state can define, even in principle,
only the probability of observations as more immediately revolutionary than
the fact that disjunction does not distribute over conjunction. We shall
now describe the way Quantum Logic deals with probabilities. Its analysis
of classical probabilities relies on the observation that a $\sigma$-field
defines a Boolean algebra with countable l.u.b's.
A (classical) probability measure is therefore a function that
attaches a real number (its probability) to every element of a Boolean algebra
and satisfies certain conditions. The concrete algebra of subsets presented
in Kolmogorov's setting is replaced by an abstract Boolean algebra.
By Stone's representation theorem, there is no loss here since any Boolean
algebra is isomorphic to a concrete algebra of subsets.
Probability measures in Quantum Logic are therefore analyzed as functions
assigning a probability to every element of an orthomodular lattice that
satisfy certain properties.
But orthomodular (or modular) lattices are not, in general,
lattices of sets and the sample space disappears from the picture.
This has three serious drawbacks.
First the intuitive idea that probability of an event is, in some sense,
the measure of the ``size'' of a set of possibilities cannot be carried on.
Secondly, the definition of a random variable, which requires a sample space,
is not possible. Thirdly, the special case of classical probabilities is
characterized by the boolean character of the lattice and this may seem at odds
with the view generally held by physicists that classical physics is the
special case of quantum physics in which all operators commute: it is
difficult to see boolean lattices as {\em commutative orthomodular lattices}.
A family of algebras generalizing boolean algebras has been proposed
in~\cite{Lehmann_andthen:JLC}
and boolean algebras are exactly the commutative algebras of the family.
The relation of those algebras to the present work needs further study.

The first concern has been addressed by setting additional requirements,
concerned with Atomicity and Covering,
on the lattice structure: see for example axioms
H1 and H2 in~\cite{Pitowsky:TheoryofProb}.
Such properties are {\em not} satisfied in Boolean algebras and therefore
classical probabilities are {\em not} a special case of Quantum probabilities.
Random variables may then be defined by functions on the atoms
of the structure.

This work proposes a framework for probability theory that generalizes
Kolmogorov's and that encompasses Quantum Probability. Classical probability
is a special case of Quantum Probability. The sample space is not eliminated,
but given some additional structure: it is an Similarity-Projection (SP)
structure.
These have been defined and studied in~\cite{SP:IJTP}. They abstract
from the real scalar product in Hilbert spaces.

\section{A more general setting} \label{sec:general}
We shall generalize Kolmogorov's setting by assuming some structure
on the sample space $\Omega$. We assume there is a {\em similarity} function
\mbox{$s : \Omega \times \Omega \longrightarrow \cR$} that associates
a real number, their similarity, to any two sample points.
Think of $x$ and $y$ as unitary vectors in a Hilbert space and think
of \mbox{$s(x, y)$} as their real scalar product squared:
\mbox{$\mid \langle x , y \rangle \mid^{2}$}.
We shall assume that the pair \mbox{$\langle \Omega , s \rangle$}
is a Similarity-Projection (SP) structure as defined in~\cite{SP:IJTP},
where $p$ was used instead of $s$. Intuitively, SP-structures may be understood
as one-dimensional subspaces of a Hilbert space with holes.
A set of $n$ elements is an n-dimensional Hilbert space with very big holes.
We shall now recall the properties of SP-structures that we shall need,
with the necessary definitions and notations.
We restrict our attention to {\em standard} SP-structures as defined
in~\cite{SP:IJTP}. The definition of a standard SP-structure is recalled
in Appendix~\ref{app:SP}.

The properties below are the ones we shall use in the sequel, they should not
be taken as a definition of SP-structures. A physically and epistemologically
motivated definition of SP-structures may be found in~\cite{SP:IJTP} where
the properties below are proved out of a set of seemingly weak assumptions.
Property~\ref{cardinality} that is so striking in Hilbert spaces is not an
assumption, it follows from more basic properties. Similarly for
Property~\ref{arb_inter}. Property~\ref{cont} seems original.
It means that the similarity function \mbox{$s(x , y)$} is, in a sense,
continuous: for $\epsilon > 0$, close enough to $0$,
if \mbox{$s(x , y) \geq 1 - \epsilon$}, then for any
\mbox{$z \in \Omega$} the difference
\mbox{$s(x , z) - s(y , z)$} is of order $\sqrt{\epsilon}$.

In the following \mbox{$x , y , z$} are arbitrary elements of the sample space
$\Omega$ and \mbox{$A , B$} are arbitrary subsets of $\Omega$.
\begin{enumerate}
\item \label{eq} \mbox{$s(x , y) \in [0 , 1]$},
and \mbox{$x = y$} iff \mbox{$s(x , y) = 1$},
\item \mbox{$s(y , x) = s(x , y)$},
\item $x$ and $y$ are said to be {\em orthogonal},
written \mbox{$x \perp y$} iff \mbox{$s(x , y) = 0$}, we say that
$x$ is orthogonal to $A$ and write \mbox{$x \perp A$} iff \mbox{$x \perp y$}
for every \mbox{$y \in A$}, we say that $A$ and $B$ are orthogonal
and write \mbox{$A \perp B$}
iff \mbox{$z \perp B$} for every \mbox{$z \in A$},
\item $A$ is said to be an {\em ortho-set} iff
all pairs of distinct elements of $A$ are orthogonal,
\item for any ortho-set $A$,
\mbox{$s(x , A) \eqdef \sum_{y \in A} s(x , y) \leq 1$},
\item $B$ is said to be a {\em subspace} and $A$ is said to be a basis for $B$
iff $A$ is an ortho-set and
\mbox{$B = \{x \in \Omega \mid s(x , A) = 1 \}$},
\item \label{cardinality}
if $B$ is a subspace all bases for $B$ have the same cardinality,
\item \label{arb_inter} if \mbox{$A_{i}$} for \mbox{$i \in I$} are subspaces,
then their intersection
\mbox{$\bigcap_{i \in I} A_{i}$} is also a subspace: subspaces are closed under
arbitrary intersections,
\item $\emptyset$ is a subspace, $\Omega$ is a subspace,
\item the orthogonal complement of any subset $A$ is defined by:
\[
A^{\perp} \eqdef \{x \in \Omega \mid x \perp A\},
\]
\item $A^{\perp}$ is a subspace, if $A$ is a subspace
then \mbox{$(A^{\perp})^{\perp} = A$},
\mbox{$\emptyset^{\perp} = \Omega$}, \mbox{$\Omega^{\perp} = \emptyset$},
\item \label{proj} for any subspace $A$ and any \mbox{$x \in \Omega$},
if $x$ is not orthogonal to $A$, there is a unique
\mbox{$t(x , A) \in A$} such that \mbox{$s(x, t(x, A)) = s(x , A)$}
and for every \mbox{$y \in A$} one has
\mbox{$s(x , y) =$} \mbox{$s(x , t(x , A)) \, s(t(x , A) , y)$},
\item \label{cont}
\begin{equation} \label{eq:cont}
s(z , x) \leq
\end{equation}
\[
s(z , y) \: + \: 1 / 2 \, \sqrt{1 - s(x , y)}
\: + \: (1 - s(x , y)).
\]
\end{enumerate}

Note that the seemingly natural {\em triangular inequality}:
\mbox{$s(x , y) \leq s(x , z) \, s(z , y)$} is not a property of
SP-structures. It does not hold in Hilbert spaces.
A classical SP-structure is defined to be
a structure in which \mbox{$s(x , y) = 0$}
whenever \mbox{$x \neq y$}. In a classical SP-structure $x$ and $y$
are orthogonal iff they are different, and $A$ and $B$ are orthogonal
iff they are disjoint. Any set $A$ is a subspace. The orthogonal complement
of a set $A$ is its set complement: \mbox{$\Omega - A$}.

\section{Properties of SP structures} \label{sec:SP}
We present here properties of SP-structures that have not been presented
in~\cite{SP:IJTP}.
We define the sum $A \oplus B$ of any two {\em subsets} of $\Omega$.
The set $A \oplus B$ is the minimal {\em subspace} that contains $A$ and $B$.

\begin{definition} \label{def:*sum}
Let \mbox{$\langle \Omega , s \rangle$} be an SP-structure.
If \mbox{$A , B \subseteq \Omega$}, their sum
\mbox{$A \oplus B$} is defined to be the smallest subspace
including \mbox{$A \cup B$}:
\[
A \oplus B \: = \:
\bigcap_{X {\rm \ is \ a \ subspace}, \ A \cup B \subseteq X} X.
\]
This definition is correct since, as noticed in~\ref{arb_inter} above,
subspaces are closed under intersection.
One easily sees that sum is commutative, associative and monotone:
\mbox{$A \subseteq A'$} implies \mbox{$A \oplus B \subseteq A' \oplus B$}.
Therefore the sum of any family (finite or infinite) of subsets
is well-defined:
\mbox{$\bigoplus_{i \in I} A_{i}$} is the intersection of all subspaces
including \mbox{$\cup_{i \in I} A_{i}$}.
\end{definition}
In a classical structure sum is union: \mbox{$A \oplus B = A \cup B$}.

\begin{lemma} \label{le:sum}
For any subspaces $A$, $B$: \mbox{$A \oplus A^{\perp} = \Omega$} and
\mbox{$A \cap A^{\perp} = \emptyset$}.
\end{lemma}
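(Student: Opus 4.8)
The plan is to prove the two claims separately, using the properties of SP-structures recalled in the excerpt — in particular Property~\ref{arb_inter} (closure of subspaces under arbitrary intersection), the double-orthocomplement law \mbox{$(A^{\perp})^{\perp} = A$} for subspaces, and the projection Property~\ref{proj}.

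For the easy part, \mbox{$A \cap A^{\perp} = \emptyset$}: suppose \mbox{$x \in A \cap A^{\perp}$}. Then \mbox{$x \perp A$}, so in particular \mbox{$x \perp x$}, i.e.\ \mbox{$s(x, x) = 0$}. But by Property~\ref{eq} we have \mbox{$s(x, x) = 1$}, a contradiction; hence the intersection is empty. This uses nothing about $A$ being a subspace, only that it is nonempty at the offending point, so it in fact holds for arbitrary $A$.

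For \mbox{$A \oplus A^{\perp} = \Omega$}: since $\Omega$ is a subspace (item~9) containing \mbox{$A \cup A^{\perp}$}, by Definition~\ref{def:*sum} we have \mbox{$A \oplus A^{\perp} \subseteq \Omega$}, so it suffices to show the reverse inclusion, i.e.\ that every subspace $X$ with \mbox{$A \cup A^{\perp} \subseteq X$} equals $\Omega$. Equivalently, I want to show that the only subspace containing both $A$ and $A^{\perp}$ is $\Omega$ itself. Take such a subspace $X$ and consider \mbox{$X^{\perp}$}, which is a subspace (item~11). Since \mbox{$A \subseteq X$} we get \mbox{$X^{\perp} \subseteq A^{\perp} \subseteq X$}, so \mbox{$X^{\perp} \subseteq X \cap X^{\perp} = \emptyset$} by the first part applied to the subspace $X$; hence \mbox{$X^{\perp} = \emptyset$}, and therefore \mbox{$X = (X^{\perp})^{\perp} = \emptyset^{\perp} = \Omega$} using that $X$ is a subspace and the identity \mbox{$\emptyset^{\perp} = \Omega$} from item~11. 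This gives \mbox{$\Omega \subseteq A \oplus A^{\perp}$} and completes the proof.

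The only real subtlety is making sure the orthocomplement inclusion reversal \mbox{$A \subseteq X \implies X^{\perp} \subseteq A^{\perp}$} is legitimate — it is immediate from the definition \mbox{$A^{\perp} = \{x \mid x \perp A\}$}, since a point orthogonal to all of $X$ is in particular orthogonal to all of $A$ — and that \mbox{$(X^{\perp})^{\perp} = X$} really does require $X$ to be a subspace, which it is by hypothesis. I do not expect to need Property~\ref{proj} or the continuity Property~\ref{cont} at all; the whole argument is a short exercise in the ortholattice identities, and the main thing to be careful about is not accidentally assuming that $A \oplus A^{\perp}$ has a basis obtained by gluing bases of $A$ and $A^{\perp}$, which would be circular here.
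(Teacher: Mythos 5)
Your proof is correct; the intersection half is exactly the paper's argument (a point of \mbox{$A \cap A^{\perp}$} would satisfy \mbox{$s(x,x)=0$}, contradicting property~\ref{eq}), but for \mbox{$A \oplus A^{\perp} = \Omega$} you take a genuinely different route. The paper picks a basis $B$ for $A$ and invokes the basis-extension theorem of the cited SP-structures paper (Theorem~1 there) to extend it to a basis \mbox{$B \cup B'$} of $\Omega$ with \mbox{$B' \subseteq A^{\perp}$} and \mbox{$s(x,B) + s(x,B') = 1$}, then notes that any subspace containing \mbox{$B \cup B'$} must be $\Omega$; this explicit orthogonal splitting is the same mechanism reused later for orthomodularity in Theorem~\ref{the:ortho}. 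You instead argue purely order-theoretically: for any subspace $X$ with \mbox{$A \cup A^{\perp} \subseteq X$}, the antitone behaviour of $\perp$ gives \mbox{$X^{\perp} \subseteq A^{\perp} \subseteq X$}, hence \mbox{$X^{\perp} \subseteq X \cap X^{\perp} = \emptyset$} by the first half, and then \mbox{$X = (X^{\perp})^{\perp} = \emptyset^{\perp} = \Omega$} by the double-complement law for subspaces, so the defining intersection for \mbox{$A \oplus A^{\perp}$} is $\Omega$. Every step is legitimate: you only use properties explicitly listed in Section~\ref{sec:general} (the double-complement law, \mbox{$\emptyset^{\perp} = \Omega$}, and the antitonicity of $\perp$, which is immediate from the definition of $A^{\perp}$), and you correctly avoid the circularity of assuming a glued basis for the sum. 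Relative to the paper's stated toolkit your argument is thus more self-contained, though the double-complement law is itself proved in the reference by essentially the same basis-extension machinery; what the paper's proof buys in exchange is the concrete orthogonal basis decomposition of $\Omega$, which carries extra quantitative information that your lattice-theoretic shortcut does not produce.
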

\begin{proof}
Let \mbox{$x \in \Omega$}. Let $B$ be a basis for $A$.
By Theorem~1 of~\cite{SP:IJTP} there is a basis for $\Omega$ that includes
$B$. Let \mbox{$B \cup B'$} be this basis: \mbox{$p(x , B) + p(x , B') = 1$}
and \mbox{$B' \subseteq A^{\perp}$}.
Any subspace that includes $B$ and $B'$ must be $\Omega$.
We have shown that \mbox{$A \oplus A^{\perp} = \Omega$}.

If \mbox{$x \in A \cap A_{\perp}$} we must have \mbox{$s(x , x) = 0$},
contradicting property~\ref{eq} above. We have shown that
\mbox{$A \cap B = \emptyset$}.
\end{proof}

We shall now show that the set of subspaces of an SP-structure is
an orthomodular complemented lattice. The lattice of closed subspaces of
a Hilbert space shows that it is not always modular. The structure we are
interested in is an orthomodular lattice, but note that we have additional
structure given by the similarity function.
\begin{theorem} \label{the:ortho}
Let \mbox{$\langle \Omega , s \rangle$} be an SP-structure.
The set of subspaces of $\Omega$ is a complete complemented orthomodular
lattice, if one takes \mbox{$A \leq B$} iff \mbox{$A \subseteq B$}.
Least upper bound is $\oplus$ and greatest lower bound is intersection.
\end{theorem}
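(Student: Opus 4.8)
The plan is to check, in order, the four assertions packed into the statement: that $\langle \{\mbox{subspaces}\},\subseteq\rangle$ is a lattice with the stated meet and join, that it is complete, that it is bounded and complemented by orthogonal complementation, and that it is orthomodular. The first three are essentially bookkeeping over the properties already recalled. Greatest lower bounds are intersections by Property~\ref{arb_inter} (subspaces are closed under arbitrary intersection); together with the existence of a top element this already yields completeness, but in fact the least upper bounds are exactly the sums $\bigoplus_{i\in I}A_{i}$ of Definition~\ref{def:*sum}, each being by construction the least subspace containing $\bigcup_{i}A_{i}$ and hence the least subspace above every $A_{i}$. So $\oplus$ is join and $\cap$ is meet. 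The lattice is bounded: $\emptyset$ and $\Omega$ are subspaces and are respectively the least and the greatest element.

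For complementation, put $A'\eqdef A^{\perp}$. By the properties recalled, $A^{\perp}$ is again a subspace, so it lies in the lattice; $A\subseteq B$ implies $B^{\perp}\subseteq A^{\perp}$ straight from the definition of $(\cdot)^{\perp}$, so $'$ is order-reversing; $(A^{\perp})^{\perp}=A$ for subspaces; and Lemma~\ref{le:sum} gives $A\oplus A^{\perp}=\Omega$ and $A\cap A^{\perp}=\emptyset$. Hence $'$ is an orthocomplementation and the structure is a complete ortholattice (in particular the De Morgan law $(A\cap B)^{\perp}=A^{\perp}\oplus B^{\perp}$ holds), and orthomodularity is the one remaining axiom.

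The real content is orthomodularity, which I will establish in the form: for subspaces $A\subseteq B$, $B=A\oplus(A^{\perp}\cap B)$. Write $D\eqdef A^{\perp}\cap B$, a subspace by Property~\ref{arb_inter}. One inclusion is immediate: $A\subseteq B$, $D\subseteq B$, and $B$ is a subspace, so $A\oplus D\subseteq B$. For the reverse I would pick a basis $C_{A}$ of $A$ and a basis $C_{D}$ of $D$. Since $D\subseteq A^{\perp}$ we have $C_{A}\perp C_{D}$, so $C_{A}\cup C_{D}$ is an ortho-set contained in $B$. The key point is that no point $z\in B$ is orthogonal to all of $C_{A}\cup C_{D}$: if $z\perp C_{A}$ then $z\perp A$ --- otherwise Property~\ref{proj} gives $t=t(z,A)\in A$ with $0=s(z,y)=s(z,t)\,s(t,y)$ for every $y\in C_{A}$ while $s(z,t)>0$, forcing $s(t,C_{A})=0$ and contradicting $t\in A$; hence $z\in A^{\perp}\cap B=D$, and then $z\perp C_{D}$ forces $z\perp D\ni z$, i.e.\ $s(z,z)=0$, contradicting Property~\ref{eq}. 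So $C_{A}\cup C_{D}$ is a maximal ortho-set of $B$; invoking the basis theory of SP-structures (Theorem~1 of~\cite{SP:IJTP} and its consequences --- a maximal ortho-set of a subspace is a basis of it, and a subspace is the least subspace containing any of its bases) we get that $C_{A}\cup C_{D}$ is a basis of $B$, and since $A\oplus D$ is a subspace containing $C_{A}\cup C_{D}$ it must contain $B$. Thus $B\subseteq A\oplus D$.

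The bookkeeping parts really are routine; the delicate point is the final step of the orthomodularity argument, where the proof leans on ``orthonormal bases'' in SP-structures behaving as in Hilbert space: every ortho-set of a subspace extends to a basis of that subspace, maximal ortho-sets are bases, and a subspace is regenerated as the least subspace containing any of its bases. This is precisely where the completeness/closedness built into the SP notion of subspace (the condition $s(x,A)=1$ together with the summability $s(x,A)\le 1$) is indispensable --- orthomodularity genuinely fails for the lattice of all, not necessarily closed, subspaces of an infinite-dimensional inner product space. I would either cite the relevant lemma of~\cite{SP:IJTP} or reprove it directly, using the projection $t(x,A)$ of Property~\ref{proj} to perform, point by point, the orthogonal decomposition of $x\in B$ into a ``component in $A$'' and a ``component in $A^{\perp}\cap B$''.
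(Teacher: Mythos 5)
Your proof is correct and follows the same skeleton as the paper's: the lattice, completeness and boundedness bookkeeping via Property~\ref{arb_inter} and Definition~\ref{def:*sum}, complementation via Lemma~\ref{le:sum}, and orthomodularity reduced to exhibiting a basis of $B$ that splits into a part lying in $A$ and a part lying in $A^{\perp}\cap B$. The difference is only in how that basis is produced. The paper does it in one line by citing Theorem~8 of~\cite{SP:IJTP} (any basis of $A$ extends to a basis of the larger subspace $C$, and the added vectors, being orthogonal to a basis of $A$, lie in $A^{\perp}\cap C$), whereas you assemble the candidate basis bottom-up from bases of $A$ and of $A^{\perp}\cap B$ and prove, via the factorization in Property~\ref{proj}, that no point of $B$ is orthogonal to it --- a nice self-contained maximality argument. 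But note that this does not remove the reliance on the reference: you still need from~\cite{SP:IJTP} that a maximal ortho-set of a subspace is a basis of that subspace (a fact of essentially the same strength as the extension theorem the paper cites, and whose proof again goes through O-Projection plus projection onto $B$), and that a subspace is the least subspace containing any of its bases, which you use in the final step $B\subseteq A\oplus(A^{\perp}\cap B)$; the paper's step $C\subseteq B\oplus B'$ silently uses that same fact. So your route makes part of the external machinery explicit and elementary, at the cost of a longer argument, while the paper's is a more direct appeal to the basis-extension theorem; both are correct at the same level of rigor.
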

\begin{proof}
The relation $\leq$ is obviously a partial order and, since,
by~\ref{arb_inter}, subspaces are closed under intersections, intersections
are greatest lower bounds. By definition sums are least upper bounds
and the lattice is complete. Lemma~\ref{le:sum} shows that it is
a complemented lattice. Orthomodularity is a consequence of Theorem~8
of~\cite{SP:IJTP}: if \mbox{$A \subseteq C$} are subspaces any basis $B$ for
$A$ can be extended into a basis $B \cup B'$ for $C$ and therefore
\mbox{$C \subseteq$}
\mbox{$B \oplus B' \subseteq$} \mbox{$A \oplus A^{\perp} \cap C$}.
\end{proof}

De Morgan's laws hold in any orthocomplemented lattice.
\begin{corollary} \label{le:deMorgan}
\begin{sloppypar}
For any subspaces $A$ and $B$
\mbox{$(A \cap B)^{\perp} =$} \mbox{$A^{\perp} \oplus B^{\perp}$} and
\mbox{$(A \oplus B)^{\perp} =$} \mbox{$A^{\perp} \cap B^{\perp}$}.
These equalities extend to arbitrary infinite sums and intersections.
\end{sloppypar}
\end{corollary}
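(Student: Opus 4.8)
The plan is to derive the corollary purely from the orthocomplemented lattice structure established in Theorem~\ref{the:ortho}, so that no new geometric facts about $s$ are needed; indeed, as remarked, De Morgan's laws hold in every orthocomplemented lattice, and I would just spell the argument out. Two ingredients suffice: that the orthogonal complement is an involution on subspaces, $(A^{\perp})^{\perp} = A$ (recalled among the properties of Section~\ref{sec:general}), and that it is order-reversing, i.e.\ $A \subseteq B$ implies $B^{\perp} \subseteq A^{\perp}$ --- the latter being immediate from the definition $A^{\perp} = \{x \in \Omega \mid x \perp A\}$, since a point orthogonal to every element of $B$ is in particular orthogonal to every element of $A$. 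Because $\perp$ is an involution, it is enough to prove the first identity $(A \cap B)^{\perp} = A^{\perp} \oplus B^{\perp}$: the second then follows by applying it with $A^{\perp}, B^{\perp}$ in place of $A, B$ and taking $\perp$ of both sides.

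First I would establish $A^{\perp} \oplus B^{\perp} \subseteq (A \cap B)^{\perp}$. From $A \cap B \subseteq A$ and $A \cap B \subseteq B$, order-reversal gives $A^{\perp} \subseteq (A \cap B)^{\perp}$ and $B^{\perp} \subseteq (A \cap B)^{\perp}$. Now $(A \cap B)^{\perp}$ is a subspace, while by Definition~\ref{def:*sum} $A^{\perp} \oplus B^{\perp}$ is the \emph{smallest} subspace containing $A^{\perp} \cup B^{\perp}$, so it is contained in $(A \cap B)^{\perp}$.

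For the reverse inclusion, from $A^{\perp} \subseteq A^{\perp} \oplus B^{\perp}$ and $B^{\perp} \subseteq A^{\perp} \oplus B^{\perp}$ order-reversal and the involution give $(A^{\perp} \oplus B^{\perp})^{\perp} \subseteq (A^{\perp})^{\perp} = A$ and likewise $(A^{\perp} \oplus B^{\perp})^{\perp} \subseteq B$, hence $(A^{\perp} \oplus B^{\perp})^{\perp} \subseteq A \cap B$; applying $\perp$ once more yields $(A \cap B)^{\perp} \subseteq A^{\perp} \oplus B^{\perp}$, and the identity follows. The infinite case is verbatim the same: for subspaces $A_{i}$, $i \in I$, the inclusions $\bigcap_{i} A_{i} \subseteq A_{j}$ give $A_{j}^{\perp} \subseteq (\bigcap_{i} A_{i})^{\perp}$ and hence $\bigoplus_{i} A_{i}^{\perp} \subseteq (\bigcap_{i} A_{i})^{\perp}$, while $A_{j}^{\perp} \subseteq \bigoplus_{i} A_{i}^{\perp}$ gives $(\bigoplus_{i} A_{i}^{\perp})^{\perp} \subseteq A_{j}$ for every $j$, so $(\bigoplus_{i} A_{i}^{\perp})^{\perp} \subseteq \bigcap_{i} A_{i}$ and then $(\bigcap_{i} A_{i})^{\perp} \subseteq \bigoplus_{i} A_{i}^{\perp}$; the other infinite identity follows from this one via the involution. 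I expect no real obstacle here --- the only thing needing care is the bookkeeping that every object to which $\cap$, $\oplus$ or $\perp$ is applied is genuinely a subspace (using closure under intersection, \ref{arb_inter}, and Definition~\ref{def:*sum}), so that the cited properties legitimately apply.
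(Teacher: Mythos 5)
Your argument is correct and is essentially the paper's: the paper disposes of the corollary with the one-line remark that De Morgan's laws hold in any orthocomplemented lattice (established here by Theorem~\ref{the:ortho}), and your proof is exactly that standard argument spelled out, using the involution \mbox{$(A^{\perp})^{\perp} = A$}, the antitonicity of $\perp$, and the characterizations of $\oplus$ and $\cap$ as least upper and greatest lower bounds. Nothing is missing; you have simply made explicit the bookkeeping the paper leaves to the reader.
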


We shall now generalize the similarity $s$ to {\em arbitrary subspaces}
of $\Omega$.
\begin{definition} \label{def:sAB}
Let \mbox{$A , B \subseteq \Omega$} be subspaces. We wish to define a measure
of their similarity, denoted \mbox{$s(A , B)$}.
Let \mbox{$x \in \Omega$}, we shall define \mbox{$\tau(x, A, B)$}
to be the similarity of $A$ and $B$ from the vantage point $x$.
Then we let \mbox{$s(A , B) =$}
\mbox{$\liminf\{ \tau(x, A, B) \mid x \in \Omega \}$}.
Now let us define \mbox{$\tau(x, A, B)$}.
In case \mbox{$x \not \perp A$} and \mbox{$x \not \perp B$}, let
\mbox{$\tau(x, A, B) =$}
\mbox{$s(t(x, A) , t(x, B))$}.
If \mbox{$x \perp A$}, we let \mbox{$\tau(x, A, B) =$}
\mbox{$1 - s(x, B)$}.
If \mbox{$x \perp B$}, we let \mbox{$\tau(x, A, B) =$}
\mbox{$1 - s(x, A)$}.
\end{definition}
Note that if \mbox{$x \perp A$} and \mbox{$x \perp B$} both last conditions
give \mbox{$\tau(x, A, B) = 1$}.
Note also that \mbox{$s(A , B) =$}
\mbox{$s(B , A)$}.

\begin{theorem} \label{the:sAB_sing}
For any \mbox{$x , y \in \Omega$},
we have \mbox{$s(\{x\} , \{y\}) =$}
\mbox{$s(x , y)$}.
\end{theorem}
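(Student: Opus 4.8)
The plan is to compute the collection $\{\tau(z,\{x\},\{y\}) \mid z \in \Omega\}$ appearing in Definition~\ref{def:sAB} explicitly. First I would note that a singleton is a subspace, so that $s(\{x\},\{y\})$ is in fact defined: $\{x\}$ is trivially an ortho-set, and by property~\ref{eq} we have $\{w \in \Omega \mid s(w,\{x\}) = 1\} = \{w \mid s(w,x) = 1\} = \{x\}$. I would also record two trivialities: whenever $z \not\perp x$ the unique projection $t(z,\{x\})$ of property~\ref{proj} lies in $\{x\}$ and hence equals $x$ (and likewise $t(z,\{y\}) = y$ when $z \not\perp y$); and $s(z,\{x\}) = s(z,x)$ since $\{x\}$ is a one-element ortho-set.

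For the inequality $s(\{x\},\{y\}) \le s(x,y)$ I would evaluate $\tau$ at the single vantage point $z = x$. If $s(x,y) > 0$ then $x \not\perp x$ and $x \not\perp y$, so $\tau(x,\{x\},\{y\}) = s(t(x,\{x\}),t(x,\{y\})) = s(x,y)$. If $s(x,y) = 0$ then $x \perp \{y\}$, so $\tau(x,\{x\},\{y\}) = 1 - s(x,\{x\}) = 1 - s(x,x) = 0 = s(x,y)$. In either case the $\liminf$ over $\Omega$ is at most $s(x,y)$.

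For the reverse inequality I would fix an arbitrary $z \in \Omega$ and show $\tau(z,\{x\},\{y\}) \ge s(x,y)$, splitting into cases according to orthogonality. If $z$ is orthogonal to neither $\{x\}$ nor $\{y\}$, then as above $\tau(z,\{x\},\{y\}) = s(x,y)$ exactly. If $z \perp x$, then $x \ne z$ by property~\ref{eq}, so $\{x,z\}$ is a two-element ortho-set, and the ortho-set bound $s(y,\{x,z\}) = s(y,x) + s(y,z) \le 1$ gives $1 - s(z,y) \ge s(x,y)$, that is, $\tau(z,\{x\},\{y\}) = 1 - s(z,\{y\}) \ge s(x,y)$; the case $z \perp y$ is symmetric, using the ortho-set $\{y,z\}$. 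Since every term $\tau(z,\{x\},\{y\})$ is at least $s(x,y)$, the $\liminf$ is at least $s(x,y)$, and combining with the previous paragraph finishes the proof.

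I do not expect a genuine obstacle: the content is just the explicit form of the projection map $t(z,\cdot)$ on singletons together with the ortho-set inequality $\sum_{u\in A}s(w,u)\le 1$. The one point needing a word of care is the meaning of $\liminf$ over the index set $\Omega$, but since all the values $\tau(z,\{x\},\{y\})$ lie in the interval $[s(x,y),1]$ and the endpoint $s(x,y)$ is attained at $z=x$, the $\liminf$ equals $s(x,y)$ under any reasonable reading.
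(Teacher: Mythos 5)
Your proof is correct and follows essentially the same route as the paper's: a case analysis on the orthogonality of the vantage point $z$, with $\tau(z,\{x\},\{y\})=s(x,y)$ in the non-orthogonal case and $\tau\geq s(x,y)$ in the orthogonal cases. You simply supply details the paper leaves implicit, namely that $\{x\}$ is a subspace, that the bound $1-s(z,y)\geq s(x,y)$ comes from the ortho-set $\{x,z\}$, and that the value $s(x,y)$ is actually attained at $z=x$.
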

\begin{proof}
If \mbox{$z \not \perp x$} and \mbox{$z \not \perp y$} we have
\mbox{$\tau(z , \{x\} , \{y\}) =$} \mbox{$s(x , y)$}.
If \mbox{$z \perp x$}, we have
\mbox{$\tau(z , x , y) =$} \mbox{$1 - s(z , y) \geq$} \mbox{$s(x , y)$}.
If \mbox{$z \perp y$}, we have
\mbox{$\tau(z , x , y) =$} \mbox{$1 - s(z , x) \geq$} \mbox{$s(x , y)$}.
\end{proof}

\begin{theorem} \label{the:sAB_le}
If \mbox{$x \in A$}, then \mbox{$s(A , B) \leq$}
\mbox{$s(x , B)$}.
\end{theorem}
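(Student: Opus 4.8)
The plan is to notice that the point $x$ itself is the vantage point that already realizes the bound. By Definition~\ref{def:sAB} the number $s(A,B)$ is the $\liminf$ (an infimum, for the present purpose) of the values $\tau(z,A,B)$ as $z$ ranges over $\Omega$, so it is enough to prove that $\tau(x,A,B)=s(x,B)$; taking $z=x$ in that infimum then yields $s(A,B)\le s(x,B)$ at once.

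To compute $\tau(x,A,B)$ I first want to identify the projection $t(x,A)$. Since $x\in A$ we cannot have $x\perp A$ — that would force $s(x,x)=0$, contradicting Property~\ref{eq} — so $t(x,A)$ is defined by Property~\ref{proj}. Because $x$ lies in the subspace $A$, any basis $C$ of $A$ satisfies $s(x,C)=1$ by the very definition of a subspace, i.e.\ $s(x,A)=1$; as $t(x,A)$ is, by Property~\ref{proj}, the unique member of $A$ at similarity $s(x,A)=1$ from $x$, and $s(x,y)=1$ holds only for $y=x$ (Property~\ref{eq}), I conclude $t(x,A)=x$.

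Next I would split on whether $x\perp B$. If $x\not\perp B$, then Definition~\ref{def:sAB} gives $\tau(x,A,B)=s(t(x,A),t(x,B))=s(x,t(x,B))$, and the first defining equation of $t(x,B)$ in Property~\ref{proj} says precisely $s(x,t(x,B))=s(x,B)$. If instead $x\perp B$, then Definition~\ref{def:sAB} gives $\tau(x,A,B)=1-s(x,A)=1-1=0$, while also $s(x,B)=0$ because $x$ is orthogonal to every element of a basis of $B$ (the case $B=\emptyset$ included). In both cases $\tau(x,A,B)=s(x,B)$, which is what was needed.

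The one place where I expect to have to be slightly careful is the identity $t(x,A)=x$: unfolding the definition of a subspace to see that a point of $A$ is its own $A$-projection. Everything after that is routine unwinding of Definition~\ref{def:sAB}. Should one worry that the $\liminf$ in Definition~\ref{def:sAB} is genuinely subtler than an infimum, the observation can be reinforced: for every vantage point $z$ with $z\not\perp B$ and $t(z,A)=x$ one has $\tau(z,A,B)=s(x,t(z,B))\le s(x,B)$, the inequality because $t(z,B)\in B$ and $s(x,y)\le s(x,B)$ for all $y\in B$ (immediate from Property~\ref{proj} when $x\not\perp B$, trivial otherwise); so a whole family of vantage points, not merely $x$, already has $\tau$-values bounded by $s(x,B)$.
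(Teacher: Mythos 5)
Your proposal is correct and follows essentially the same route as the paper: take the vantage point $z=x$, use $s(A,B)\le\tau(x,A,B)$, and split on whether $x\perp B$. The only difference is that you make explicit the step $t(x,A)=x$ (via $s(x,A)=1$ and Property~\ref{eq}), which the paper's proof uses implicitly when it writes $\tau(x,A,B)=s(x,t(x,B))$.
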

\begin{proof}
\begin{sloppypar}
Suppose \mbox{$x \not \perp B$}.
Then one has \mbox{$s(A , B) \leq$}
\mbox{$\tau(x , A , B) =$}
\mbox{$s(x , t(x , B)) =$} \mbox{$s(x , B)$}.
If \mbox{$x \perp B$} we have
\mbox{$s(A , B) \leq$}
\mbox{$\tau(x , A , B) =$}
\mbox{$1 - s(x , A) =$} \mbox{$0 \leq$} \mbox{$s(x , B)$}.
\end{sloppypar}
\end{proof}
In general, \mbox{$s(\{x\} , B) <$} \mbox{$s(x , B)$}.

\begin{theorem} \label{the:sAB}
\mbox{$s(A , B) = 1$} iff \mbox{$A = B$}.
\end{theorem}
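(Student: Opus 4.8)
The plan is to prove the two implications separately, and to route the forward implication through Theorem~\ref{the:sAB_le} so that the precise reading of the $\liminf$ over $\Omega$ never has to be pinned down.

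For the direction $s(A,B)=1 \Rightarrow A=B$ I would argue by double inclusion. Fix an arbitrary $x \in A$. Theorem~\ref{the:sAB_le} gives $s(A,B) \le s(x,B)$, where $s(x,B)$ is the quantity appearing in property~\ref{proj}, i.e.\ $\sum_{y\in C} s(x,y)$ for a basis $C$ of $B$; this is $\le 1$ by the bound recalled for ortho-sets, and it equals $1$ exactly when $x \in B$, by the definition of a subspace. Hence $1 = s(A,B) \le s(x,B) \le 1$ forces $s(x,B)=1$, so $x \in B$. As $x \in A$ was arbitrary, $A \subseteq B$. Since $s(A,B)=s(B,A)$ and the statement is symmetric in $A$ and $B$, the same argument with the roles of $A$ and $B$ exchanged yields $B \subseteq A$, hence $A=B$. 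The degenerate case is absorbed by the same reasoning: if, say, $A=\emptyset \ne B$, picking $y \in B$ and applying Theorem~\ref{the:sAB_le} gives $1 = s(B,A) \le s(y,\emptyset) = 0$, a contradiction, so $s(A,B)=1$ cannot occur.

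For the direction $A=B \Rightarrow s(A,B)=1$ I would just evaluate $\tau(z,A,A)$ at an arbitrary $z \in \Omega$. If $z \not\perp A$, then $t(z,A)$ is defined and $\tau(z,A,A)=s(t(z,A),t(z,A))=1$ by property~\ref{eq}. If $z \perp A$, then $s(z,y)=0$ for every $y$ in a basis of $A$, so $s(z,A)=0$ and $\tau(z,A,A)=1-s(z,A)=1$. Thus $\tau(z,A,A)=1$ for every $z$, and since $\Omega$ is non-empty, $s(A,A)=1$ no matter how the aggregation over $\Omega$ in Definition~\ref{def:sAB} is interpreted.

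I do not anticipate a genuine obstacle; the argument is short and essentially bookkeeping. The one place that requires care is keeping the two meanings of the notation $s(x,B)$ apart: the basis sum $\sum_{y\in C} s(x,y)$ that appears in Theorem~\ref{the:sAB_le} and property~\ref{proj} (which is $1$ precisely on the members of the subspace $B$), as opposed to the generalized similarity $s(\{x\},B)$ of Definition~\ref{def:sAB}, which the remark following Theorem~\ref{the:sAB_le} warns is in general strictly smaller. Invoking Theorem~\ref{the:sAB_le} rather than unwinding Definition~\ref{def:sAB} by hand is exactly what lets us avoid dealing with that generalized quantity, and with the $\liminf$, in the forward direction.
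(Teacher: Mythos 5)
Your proof is correct and follows essentially the same route as the paper: the backward direction is the identical case analysis on $\tau(z,A,A)$, and the forward direction is the same double-inclusion argument, merely packaged through Theorem~\ref{the:sAB_le} (whose proof contains the very case split on $x\perp B$ that the paper redoes inline) together with the characterization $s(x,B)=1 \iff x\in B$ instead of property~\ref{eq} applied to $x$ and $t(x,B)$. Your explicit handling of the $A=\emptyset$ case and the care in distinguishing $s(x,B)$ from $s(\{x\},B)$ are fine, minor refinements of the same argument.
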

\begin{proof}
Let \mbox{$A = B$}. If \mbox{$x \not \perp A$} we have
\mbox{$\tau(x , A , B) =$} \mbox{$s(t(x , A), t(x , A)) = 1$}.
If \mbox{$x \perp A$} we have
\mbox{$\tau(x , A , B) =$} \mbox{$1 - s(x, B) =$} \mbox{$1$}.
Assume, now, that \mbox{$s(A , B) = 1$}.
For every \mbox{$x \in \Omega$} we have \mbox{$\tau(x , A , B) = 1$}.
Suppose \mbox{$x \in A$}.
If \mbox{$x \not \perp B$} we have \mbox{$\tau(x , A , B) = $}
\mbox{$s(x, t(x, B)) = 1$} and \mbox{$x \in B$}.
If \mbox{$x \perp B$}, we have
\mbox{$1 - s(x , t(x , A)) = 1$} and
\mbox{$s(x, x) = 0$}, a contradiction.
We conclude that \mbox{$A \subseteq B$}.
Similarly we have \mbox{$B \subseteq A$} and we conclude that
\mbox{$A = B$}.
\end{proof}

We may now generalize Inequality~\ref{eq:cont}.
\begin{theorem} \label{the:sAB_in}
Let \mbox{$A , B , C \subseteq \Omega$} be subspaces. We have
\[
s(A , B) \leq s(A , C) + 1 / 2 \sqrt{1 - s(B , C)} + 1 - s(B , C).
\]
\end{theorem}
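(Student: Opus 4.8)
The plan is to let a single, well-chosen vantage point do all the work, using two immediate consequences of the definition of $s$ as a $\liminf$: $s(A,B)\le\tau(w,A,B)$ for \emph{every} $w\in\Omega$, and $\tau(w,B,C)\ge s(B,C)$ for every $w\in\Omega$. Fix $\epsilon>0$ and suppose we have produced a point $w$ with $w\not\perp A$, $w\not\perp B$, $w\not\perp C$ and $\tau(w,A,C)<s(A,C)+\epsilon$. For such a $w$, Definition~\ref{def:sAB} gives $\tau(w,A,B)=s(t(w,A),t(w,B))$, $\tau(w,A,C)=s(t(w,A),t(w,C))$ and $\tau(w,B,C)=s(t(w,B),t(w,C))$. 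Applying Inequality~\ref{eq:cont} to the three points $t(w,A),t(w,B),t(w,C)$, with $t(w,A)$ playing the role of $z$ and $t(w,B),t(w,C)$ the roles of $x,y$, yields
\[
\tau(w,A,B) \le \tau(w,A,C) + 1/2\,\sqrt{1-\tau(w,B,C)} + (1-\tau(w,B,C)).
\]
Since $t\mapsto 1/2\,\sqrt{1-t}+(1-t)$ is decreasing on $[0,1]$ and $\tau(w,B,C)\ge s(B,C)$, the right-hand side is at most $\tau(w,A,C)+1/2\,\sqrt{1-s(B,C)}+(1-s(B,C))$, hence below $s(A,C)+\epsilon+1/2\,\sqrt{1-s(B,C)}+(1-s(B,C))$. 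As $s(A,B)\le\tau(w,A,B)$ and $\epsilon$ was arbitrary, the theorem follows.

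Everything therefore reduces to producing that point $w$; equivalently, to showing $s(A,C)=\inf\{\tau(w,A,C)\mid w\not\perp A,\ w\not\perp B,\ w\not\perp C\}$, the inequality $\le$ being the non-trivial one since this infimum is over a subset of $\Omega$. I would start from any $w_0$ with $\tau(w_0,A,C)<s(A,C)+\epsilon/2$. Unless $s(A,C)=1$, i.e.\ unless $A=C$ (Theorem~\ref{the:sAB}), we may take $\tau(w_0,A,C)<1$, so $w_0$ is not orthogonal to both $A$ and $C$. If $w_0\perp C$, so $\tau(w_0,A,C)=1-s(w_0,A)$, replace $w_0$ by a point $w_1$ close to it with $w_1\not\perp C$ and $t(w_1,C)$ equal to a chosen $c\in C$: then $w_1\not\perp A$ still, and $\tau(w_1,A,C)$ can be made as close as we like to $s(t(w_0,A),c)$, which is $\le 1-s(w_0,A)=\tau(w_0,A,C)$ because $\{w_0,c\}$ is an ortho-set and hence $s(t(w_0,A),w_0)+s(t(w_0,A),c)\le 1$; the case $w_0\perp A$ is symmetric. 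If the point so obtained is still orthogonal to $B$, one further small move towards a point of $B$ removes this, and by the continuity expressed in Property~\ref{cont} (continuity of $s$, and hence of $\tau$, in the vantage point) it changes $\tau(\cdot,A,C)$ by less than $\epsilon/2$. The degenerate possibilities --- $A$, $B$ or $C$ empty, or $A=C$ --- are disposed of directly: each forces some $s(\cdot,\cdot)$ on the left to be $0$, or makes the inequality read $s(A,B)\le1+(\text{nonnegative})$.

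The hard part will be this last step. A \emph{pointwise} form of the theorem is false: a vantage point $w$ with $w\perp B$ has $\tau(w,A,B)=1-s(w,A)$, which can be close to $1$ while $s(A,B)$, $s(A,C)$ and $1-s(B,C)$ are all close to $0$ (e.g.\ when $A$ is orthogonal to $B$, $C$ is close to $B$ but nearly orthogonal to $A$, and $w$ is nearly orthogonal to $A$). So one must genuinely move to a vantage point non-orthogonal to all three subspaces, and the delicate point is checking that the small moves above --- towards $C$, towards $A$, or towards $B$ --- never spoil near-minimality of $\tau(\cdot,A,C)$; this is exactly where the ortho-set bound $s(x,A)\le1$ and the continuity Property~\ref{cont} enter. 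The remainder is routine unfolding of Definition~\ref{def:sAB} plus one invocation of Inequality~\ref{eq:cont}.
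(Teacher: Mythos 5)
Your non-degenerate core is correct and coincides exactly with the first case of the paper's own argument: for a vantage point $w$ non-orthogonal to all of $A$, $B$, $C$, applying Inequality~\ref{eq:cont} to $t(w,A)$, $t(w,B)$, $t(w,C)$ and using \mbox{$\tau(w,B,C)\geq s(B,C)$} with the monotonicity of \mbox{$t\mapsto 1/2\sqrt{1-t}+(1-t)$} gives \mbox{$s(A,B)\leq\tau(w,A,C)+1/2\sqrt{1-s(B,C)}+(1-s(B,C))$}. The genuine gap is the step you yourself defer: the claim that \mbox{$s(A,C)=\inf\{\tau(w,A,C)\mid w\not\perp A,\ w\not\perp B,\ w\not\perp C\}$}. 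This is not only unproved, it is false in general SP-structures, because the set of admissible vantage points can be empty: in a classical SP-structure (\mbox{$s(x,y)=0$} for \mbox{$x\neq y$}) the condition \mbox{$w\not\perp A$} means \mbox{$w\in A$}, so for pairwise disjoint non-empty $A$, $B$, $C$ no $w$ qualifies and your infimum ranges over the empty set (the theorem is trivially true there, but your argument does not yield it). Even when admissible points exist, your repair moves --- a $w_1$ close to $w_0$ with $t(w_1,C)$ equal to a prescribed \mbox{$c\in C$}, then a further small move towards $B$ --- presuppose that $\Omega$ contains such interpolating points and that $\tau(\cdot,A,C)$, hence the projection $t(\cdot,A)$, varies continuously with the vantage point. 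Neither follows from the listed properties: SP-structures are ``Hilbert spaces with holes'', possibly discrete, and Property~\ref{cont} bounds $s$ along nearby points but guarantees neither the existence of nearby points nor continuity of $t(\cdot,A)$. So ``the hard part'' is precisely the missing proof, and it cannot be supplied by perturbation in this abstract setting; the vantage points orthogonal to one or more of $A$, $B$, $C$ must be treated head on.

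That is where the paper goes a different way: it never hunts for a good vantage point, but asserts the pointwise bound \mbox{$\tau(x,A,B)\leq\tau(x,A,C)+1/2\sqrt{1-s(B,C)}+(1-s(B,C))$} for \emph{every} \mbox{$x\in\Omega$}, proving it by case analysis on which of $A$, $B$, $C$ the point $x$ is orthogonal to (using Theorem~\ref{the:sAB_le} and the ortho-set bound), and then takes the infimum, so no existence or continuity claims are needed. Your side remark that a pointwise bound fails when \mbox{$w\perp B$}, $w$ is nearly orthogonal to $A$, and $C$ is close to $B$ is a sharp and correct observation (it is realized already by rays in a three-dimensional real Hilbert space), and it bears directly on the case the paper dismisses as ``treated similarly'': there \mbox{$\tau(w,A,B)=1-s(w,A)$} can be close to $1$ while $\tau(w,A,C)$ and \mbox{$1-s(B,C)$} are small, so that case of the paper's pointwise claim is itself in trouble and deserves scrutiny. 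Identifying this difficulty is valuable, but your proposal avoids it rather than resolving it, so as it stands it is not a proof of the theorem.
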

\begin{proof}
It is enough to show that, for any \mbox{$x \in \Omega$}, we have
\begin{equation} \label{eq:prove}
\tau(x , A , B) \leq
\end{equation}
\[
\tau(x , A , C) +
1 / 2 \sqrt{1 - s(B , C)} + 1 - s(B , C).
\]
Suppose, first, that \mbox{$x \not \perp A$}, \mbox{$x \not \perp B$}
and \mbox{$x \not \perp C$}.
Let \mbox{$w = t(x , A)$}, \mbox{$y = t(x , B)$} and \mbox{$z = t(x , C)$}.
By Inequality~\ref{eq:cont} we have:
\[
s(w , y) \leq s(w , z) + 1 / 2 \sqrt{1 - s(y , z)} + 1 - s(y , z)
\]
and therefore
\[
\tau(x , A , B) \leq
\]
\[
\tau(x , A , C) +
1 / 2 \sqrt{1 - \tau(x , B , C)} + 1 - \tau(x , B , C).
\]
But \mbox{$s(B , C) \leq$} \mbox{$\tau(x , B , C)$} and Equation~\ref{eq:prove}
is proved.
Suppose, now, that \mbox{$x \not \perp A$}, \mbox{$x \not \perp B$} but
\mbox{$x \perp C$}. Let \mbox{$w = t(x , A)$} and \mbox{$y = t(x , B)$}.
We must show that
\[
s(w , y) \leq 1 - s(x , A) + 1 / 2 \sqrt{1 - s(B , C)} + 1 - s(B , C).
\]
But \mbox{$s(x , A) =$} \mbox{$s(x , w)$}.
We know that \mbox{$x \perp C$} and \mbox{$s(w , x) + s(w , C) \leq 1$}.
It is therefore enough to show that
\[
s(w , y) \leq s(w , C) + 1 / 2 \sqrt{1 - s(B , C)} + 1 - s(B , C).
\]
If \mbox{$w \not \perp C$}, by Inequality~\ref{eq:cont}, we have
\[
s(w , y) \leq
\]
\[
s(w , t(w , C)) + 1 / 2 \sqrt{1 - s(\{y\} , C)} +
1 - s(\{y\} , C).
\]
We conclude by Theorem~\ref{the:sAB_le}. If \mbox{$w \perp C$},
it is enough to show that
\mbox{$s(w , y) + s(B , C) \leq 1$}. But \mbox{$s(B , C) \leq$}
\mbox{$s(y , C)$} by Theorem~\ref{the:sAB_le} and
we have \mbox{$s(y , w) + s(y , C) \leq 1$}.

The case \mbox{$x \not \perp A$}, \mbox{$x \not \perp C$} but
\mbox{$x \perp B$} is treated similarly.

If \mbox{$x \not \perp A$}, \mbox{$x \perp B$} and \mbox{$x \perp C$}
we must show that
\[
1 - s(x , A) \leq 1 - s(x , A) + 1 / 2 \sqrt{1 - s(B , C)} + 1 - s(B , C)
\]
which is obvious.

We are left with the case \mbox{$x \perp A$}.
We must show that
\[
1 - s(x , B) \leq 1 - s(x , C) +
1 / 2 \sqrt{1 - s(B , C)} + 1 - s(B , C),
\]
or equivalently
\[
s(x , C) \leq s(x , B) +
1 / 2 \sqrt{1 - s(B , C)} + 1 - s(B , C).
\]
If \mbox{$x \perp C$} the claim is obvious.
Assume \mbox{$x \not \perp C$} and let \mbox{$z = t(x , C)$}.
If \mbox{$x \perp B$} we have \mbox{$s(z , x) + s(z , B) \leq 1$}
and therefore
\mbox{$s(x , C) \leq 1 - s(z , B) \leq 1 - s(B , C)$}.
If, last, \mbox{$x \not \perp B$} and we let \mbox{$y = t(x , B)$},
by Inequality~\ref{eq:cont} we have:
\[
s(x , z) \leq s(x , y) + 1 / 2 \sqrt{1 - s(y , z)} + 1 - s(y , z)
\]
and therefore
\[
s(x , C) \leq s(x , B) +
1 / 2 \sqrt{1 - \tau(x , B , C)} + 1 - \tau(x , B , C) \leq
\]
\[
s(x , B) +
1 / 2 \sqrt{1 - s(B , C)} + 1 - s(B , C).
\]
\end{proof}

In classical SP-structures $s(A , B)$ is equal to $1$ iff \mbox{$A = B$}
and equal to $0$ otherwise.

\section{$\sigma$*-fields} \label{sec:sigma*}
We want to generalize Definition~\ref{def:sigma}, i.e.,
the definition of a $\sigma$-field on a set
$\Omega$ to that of a {\em $\sigma$*-field} over an SP-structure
\mbox{$\langle \Omega , s \rangle$}. As expected, we require that
the $\sigma$*-field \cF\ be closed under countably many sums
(sums generalize unions) and under orthogonal complement
(they generalize set-complements).
But we require the elements of a $\sigma$*-field \cF\ are {\em subspaces},
not arbitrary subsets, of $\Omega$.

\begin{definition} \label{def:*-field}
Let \mbox{$\langle \Omega , s \rangle$} be an SP-structure.
A set \cF\ of {\em subspaces} of \mbox{$\langle \Omega , s \rangle$}is said
to be a {\em $\sigma$*-field} over \mbox{$\langle \Omega , s \rangle$} iff:
\begin{enumerate}
\item \label{empty} \mbox{$\emptyset \in \cF$},
\item \label{perp}
for every \mbox{$A \in \cF$}, its orthogonal complement $A^{\perp}$ is in
\cF,
\item \label{den_sum}
for any set, finite or {\em countably} infinite $A_{i}$, \mbox{$i \in I$},
of pairwise orthogonal elements of \cF, its sum
\mbox{$\bigoplus_{i \in I} A_{i}$} is in \cF.
\end{enumerate}
Elements of \cF\ are called {\em events}.
\end{definition}
Note that \mbox{$\Omega = \emptyset^{\perp}$} is an event.

If \mbox{$\langle \Omega , s \rangle$} is a classical SP-structure then
the notion of a $\sigma$*-field on the structure is equivalent to
that of a $\sigma$-field on $\Omega$.

\begin{lemma} \label{le:inter}
Assume \cF\ is a $\sigma$*-field on \mbox{$\langle \Omega , s \rangle$}, $I$
is finite or {\em countably} infinite, and for any \mbox{$i \in I$},
$A_{i}$ is an element of \cF.
Then, the intersection
\mbox{$\bigcap_{i \in I} A_{i}$} is in \cF.
\end{lemma}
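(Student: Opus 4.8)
The plan is to reduce the statement about countable intersections to the two closure properties that define a $\sigma$*-field — closure under orthogonal complement and closure under countable sums of \emph{pairwise orthogonal} families — by way of De~Morgan's law (Corollary~\ref{le:deMorgan}). The obvious move is to write $\bigcap_{i \in I} A_{i} = \left( \bigoplus_{i \in I} A_{i}^{\perp} \right)^{\perp}$, which holds for subspaces by the infinite form of De~Morgan's laws, and then observe that each $A_{i}^{\perp}$ is in \cF\ by property~\ref{perp}. If the $A_{i}^{\perp}$ happened to be pairwise orthogonal we would be done immediately by property~\ref{den_sum} followed by another application of property~\ref{perp}. The difficulty, and the real content of the lemma, is that there is no reason for the family $\{A_{i}^{\perp}\}$ to be pairwise orthogonal, so the sum closure axiom does not apply directly.

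To get around this I would first reduce to the case $I$ infinite (the finite case being handled by a trivial induction once the two-set case is settled, and the two-set case $A_{1} \cap A_{2}$ following from $A_{1} \cap A_{2} = (A_{1}^{\perp} \oplus A_{2}^{\perp})^{\perp}$ together with the $n=2$ instance of the orthogonalization argument below). Then, given a countable family $B_{i} \eqdef A_{i}^{\perp}$ in \cF, I would build from it a pairwise orthogonal family with the same sum. Set $C_{1} = B_{1}$ and, inductively, $C_{n+1} = B_{n+1} \cap (C_{1} \oplus \cdots \oplus C_{n})^{\perp}$. Each $C_{n}$ is a subspace; the $C_{n}$ are pairwise orthogonal by construction (for $m < n$, $C_{n} \subseteq (C_{1} \oplus \cdots \oplus C_{n-1})^{\perp} \subseteq C_{m}^{\perp}$, using property~\ref{perp} of the listed SP-properties that $(X^{\perp})^{\perp}=X$ for subspaces and that $X \subseteq Y$ implies $Y^{\perp} \subseteq X^{\perp}$). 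One checks by induction that $C_{1} \oplus \cdots \oplus C_{n} = B_{1} \oplus \cdots \oplus B_{n}$: the inclusion $\subseteq$ is clear, and for $\supseteq$ one uses orthomodularity of the lattice of subspaces (Theorem~\ref{the:ortho}) in the form $B_{n+1} = \bigl(B_{n+1} \cap D\bigr) \oplus \bigl(B_{n+1} \cap D^{\perp}\bigr)$ where $D = C_{1} \oplus \cdots \oplus C_{n}$, since $B_{n+1} \cap D \subseteq D \subseteq D \oplus (B_{n+1}\cap D^{\perp})$. Passing to the limit, $\bigoplus_{n} C_{n} = \bigoplus_{n} B_{n}$, because both are the least subspace containing all the finite partial sums, which agree.

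Crucially, each $C_{n}$ lies in \cF: $C_{n}$ is obtained from $B_{n} \in \cF$ and finitely many elements of \cF\ using only orthogonal complement and \emph{finite} sums of pairwise orthogonal elements — i.e. the finite case of the lemma, which we have already disposed of, together with property~\ref{perp}. Hence $\{C_{n}\}$ is a countable pairwise orthogonal family in \cF, so $\bigoplus_{n} C_{n} \in \cF$ by property~\ref{den_sum}, and therefore $\bigcap_{i} A_{i} = \bigl(\bigoplus_{i} A_{i}^{\perp}\bigr)^{\perp} = \bigl(\bigoplus_{n} C_{n}\bigr)^{\perp} \in \cF$ by property~\ref{perp}. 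The main obstacle is precisely the orthogonalization step: one must be careful that the Gram–Schmidt-style construction stays inside \cF\ (which forces the layered dependence on the already-established finite case) and that the partial-sum identity is proved using orthomodularity rather than the distributivity one would reflexively want to use.
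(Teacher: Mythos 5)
Your diagnosis of where the difficulty lies is correct, but the repair does not work: the step you call ``orthomodularity in the form $B_{n+1} = (B_{n+1} \cap D) \oplus (B_{n+1} \cap D^{\perp})$'' is not the orthomodular law and is false in general. Orthomodularity (Theorem~\ref{the:ortho}) only says that if $X \subseteq Y$ then $Y = X \oplus (Y \cap X^{\perp})$; applied with $X = B_{n+1} \cap D$ and $Y = B_{n+1}$ it yields $B_{n+1} = (B_{n+1} \cap D) \oplus \bigl(B_{n+1} \cap (B_{n+1}\cap D)^{\perp}\bigr)$, whose second summand is in general strictly larger than $B_{n+1} \cap D^{\perp}$. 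The identity you actually need is the compatibility (commutation) law, which fails exactly for the non-commuting subspaces this paper is about. Concretely, in the SP-structure of rays of a real Hilbert space of dimension $2$, take $B_{1} = \{e_{1}\}$ and $B_{2} = \{d\}$ with $d$ a ray neither equal nor orthogonal to $e_{1}$: then $D = C_{1} = B_{1}$, $B_{2} \cap D = \emptyset$, $B_{2} \cap D^{\perp} = \emptyset$, so your decomposition would read $B_{2} = \emptyset$, and your orthogonalized family gives $C_{2} = \emptyset$, hence $C_{1} \oplus C_{2} = B_{1} \neq B_{1} \oplus B_{2} = \Omega$. Thus the partial-sum identity $C_{1}\oplus\cdots\oplus C_{n} = B_{1}\oplus\cdots\oplus B_{n}$ already fails at $n=2$, the passage to the limit collapses with it, and since you reduced the finite case (including $A_{1}\cap A_{2}$) to this same $n=2$ orthogonalization, no stage of the induction survives. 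Subspace-level Gram--Schmidt does not preserve spans: $B_{n+1}\cap D^{\perp}$ can be empty even when $B_{n+1} \not\subseteq D$.

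For comparison, the paper's own proof is a one-line appeal to Corollary~\ref{le:deMorgan}: $\bigcap_{i} A_{i} = \bigl(\bigoplus_{i} A_{i}^{\perp}\bigr)^{\perp}$, with $\bigoplus_{i} A_{i}^{\perp}$ treated as an event; it silently uses closure of \cF\ under arbitrary countable sums, i.e.\ it does not engage the very point you raise about clause~\ref{den_sum} of Definition~\ref{def:*-field} being restricted to pairwise orthogonal families. Your worry is therefore sharper than the paper's proof, but it cannot be resolved by orthogonalizing inside the lattice, nor, in fact, by any argument from the axioms as literally stated: in the rays of three-dimensional real space, two distinct planes $A$, $B$ whose normal rays are not orthogonal give the family $\{\emptyset, \Omega, A, A^{\perp}, B, B^{\perp}\}$, which satisfies all three clauses of Definition~\ref{def:*-field} (no two of its nontrivial elements other than the pairs $A, A^{\perp}$ and $B, B^{\perp}$ are orthogonal) yet does not contain the line $A \cap B$. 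So the lemma requires the stronger reading of the sum axiom that the paper's De Morgan argument implicitly assumes (closure under sums of arbitrary countable families), and any correct proof must say so rather than try to manufacture orthogonality.
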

\begin{proof}
By Corollary~\ref{le:deMorgan}.
\end{proof}

\begin{corollary} \label{le:Fortho}
Any $\sigma$*-field is a bounded complemented orthomodular
lattice, if one takes \mbox{$A \leq B$} iff \mbox{$A \subseteq B$}.
Least upper bound is $\oplus$ and greatest lower bound is intersection.
Countably infinite sets have l.u.b. and g.l.b. but the lattice is not,
in general, complete.
\end{corollary}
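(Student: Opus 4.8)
The plan is to present \cF\ as a substructure of the complete orthomodular lattice of \emph{all} subspaces of \mbox{$\langle \Omega , s \rangle$} furnished by Theorem~\ref{the:ortho}: once I check that the sets computing joins, meets, complements and the orthomodular identity all stay inside \cF, boundedness, orthocomplementation and orthomodularity come for free by inheritance, and only the countable (as opposed to arbitrary) bounds survive.

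First I would establish the closure properties. Restricting $\subseteq$ to \cF\ is obviously a partial order. Binary intersections stay in \cF\ by Lemma~\ref{le:inter} (two-element index set), and since $A \cap B$ is already the greatest lower bound of $\{A , B\}$ among \emph{all} subspaces, it is a fortiori the greatest lower bound inside \cF. For sums the point is that clause~\ref{den_sum} of Definition~\ref{def:*-field} is of no direct use here, since it only covers \emph{pairwise orthogonal} families; instead I would route through De~Morgan. By Corollary~\ref{le:deMorgan} together with the involutivity of $\perp$ on subspaces, $A \oplus B = (A^{\perp} \cap B^{\perp})^{\perp}$; now $A^{\perp} , B^{\perp} \in \cF$ by clause~\ref{perp}, so $A^{\perp} \cap B^{\perp} \in \cF$ by Lemma~\ref{le:inter}, and hence $A \oplus B \in \cF$ by clause~\ref{perp} again. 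Being the least upper bound in the ambient lattice, $A \oplus B$ is also the least upper bound inside \cF. Iterating, \cF\ is closed under all finite joins and meets, so $\langle \cF , \subseteq \rangle$ is a lattice with $\vee = \oplus$ and $\wedge = \cap$.

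Then the structural axioms are inherited. Bounds: $\emptyset \in \cF$ by clause~\ref{empty} is the least element and $\Omega = \emptyset^{\perp} \in \cF$ by clause~\ref{perp} is the greatest. Orthocomplementation: the map $A \mapsto A^{\perp}$ sends \cF\ to \cF\ by clause~\ref{perp}, is order-reversing directly from the definition of $\perp$ and involutive on subspaces by the properties recalled in Section~\ref{sec:general}, and by Lemma~\ref{le:sum} satisfies $A \wedge A^{\perp} = \emptyset$ and $A \vee A^{\perp} = \Omega$; so it is an orthocomplementation on \cF. Orthomodularity: the identity ``$A \subseteq C$ implies $C = A \oplus (A^{\perp} \cap C)$'' holds among all subspaces by Theorem~\ref{the:ortho}, and for $A , C \in \cF$ each of $A^{\perp}$, $A^{\perp} \cap C$, $A \oplus (A^{\perp} \cap C)$ lies in \cF\ with the operations unchanged, so the identity persists in \cF.

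For the countable bounds, given a finite or countably infinite family $\{A_{i}\}_{i \in I}$ in \cF: $\bigcap_{i} A_{i} \in \cF$ by Lemma~\ref{le:inter} and is the greatest lower bound, while the infinite De~Morgan law of Corollary~\ref{le:deMorgan} gives $\bigoplus_{i} A_{i} = (\bigcap_{i} A_{i}^{\perp})^{\perp}$, which lies in \cF\ by the same short argument and is the least upper bound (in particular clause~\ref{den_sum} is subsumed). That completeness fails in general is visible already classically: over an uncountable $\Omega$ take \cF\ to be the $\sigma$*-field of countable and cocountable sets and fix $S \subseteq \Omega$ with $S$ and $\Omega \setminus S$ both uncountable; the family $\{\{x\} : x \in S\}$ has no least upper bound in \cF, because any upper bound in \cF\ is cocountable and deleting from it one point of $\Omega \setminus S$ yields a strictly smaller cocountable upper bound. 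The one genuinely non-routine step, and the place where the argument could go wrong, is the closure of \cF\ under binary and countable joins: it is \emph{not} a consequence of the sum clause of Definition~\ref{def:*-field} alone and must be obtained by converting joins into meets of orthocomplements.
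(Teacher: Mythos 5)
Your proof is correct and takes essentially the same route the paper intends: the corollary is inherited from Theorem~\ref{the:ortho} once \cF\ is seen to be closed under orthocomplement and countable intersections (Lemma~\ref{le:inter}), with non-orthogonal joins recovered through De~Morgan (Corollary~\ref{le:deMorgan}), so the bounded complemented orthomodular structure restricts to \cF. Your explicit countable/cocountable example showing failure of completeness is a useful addition to what the paper merely asserts.
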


\section{Probability distributions} \label{sec:prob_dist}
We may now generalize Definition~\ref{def:prob}.
We shall define *-probabilities that attach a probability to
events of a $\sigma$*-field. Note that {\em states} in Quantum Physics
are such *-probabilities.
Our first three conditions are those of Definition~\ref{def:prob},
but a fourth condition is added to ensure that probabilities are,
in a sense, {\em continuous}.
If the subspaces $A$ and $B$ are {\em close}, i.e.,
\mbox{$s(A , B)$} is close to $1$,
then we expect $p(A)$ and $p(B)$ to be close to each other.

\begin{definition} \label{def:*-prop}
Assume \mbox{$\langle \Omega , s \rangle$} is an SP-structure, and
\cF\ is a $\sigma$-* field on \mbox{$\langle \Omega , s \rangle$}.
A *-probability on \mbox{$\langle \Omega , s, \cF \rangle$}
is a function \mbox{$p : \cF \longrightarrow [0 , + \infty]$} that satisfies:
\begin{enumerate}
\item \mbox{$p(\emptyset) = 0$},
\item \mbox{$p(\Omega) = 1$},
\item for any finite or countably infinite set, \mbox{$A_{i}$},
\mbox{$i \in I$} of pairwise orthogonal elements of \cF\ one has:
\mbox{$p(\bigoplus_{i \in I} A_{i}) = \sum_{i \in I} p(A_{i})$},
\item for any events $A$, $B$, we have
\begin{equation}
p(A) \leq
\end{equation}
\[
p(B) + 1 / 2 \, \sqrt{1 - s(A , B)} + (1 - s(A , B)).
\]
\end{enumerate}
\end{definition}
Note that, in our third condition, the sum
\mbox{$\bigoplus_{i \in I} A_{i}$} is an event by Definition~\ref{def:*-field}.
Our fourth condition is taken from~\ref{cont} above, which has been shown
to be tight in~\cite{SP:IJTP}.

It is clear that convex combinations of probabilities are probabilities.
\begin{lemma} \label{le:mixed}
Assume \mbox{$\langle \Omega , s \rangle$} and \cF\ are fixed.
If for any \mbox{$i \in I$} \mbox{$p_{i}$} is a probability
and \mbox{$w_{i} \in [0 , 1]$}
are such that \mbox{$\sum_{i \in I} w_{i} = 1$}, then
\mbox{$q = \sum_{i \in I} w_{i} \, p_{i}$} defined by
\mbox{$q(A) = \sum_{i \in I} w_{i} \, p_{i}(A)$} for any \mbox{$A \in \cF$}
is a probability.
\end{lemma}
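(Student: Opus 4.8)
The plan is to verify each of the four conditions of Definition~\ref{def:*-prop} for $q$ in turn, in each case deriving the condition for $q$ from the corresponding condition for the individual probabilities $p_{i}$ together with the normalization $\sum_{i \in I} w_{i} = 1$. The first two conditions are immediate: $q(\emptyset) = \sum_{i \in I} w_{i}\, p_{i}(\emptyset) = 0$ and $q(\Omega) = \sum_{i \in I} w_{i}\, p_{i}(\Omega) = \sum_{i \in I} w_{i} = 1$.

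For the third condition, let $A_{j}$, $j \in J$, be a finite or countably infinite family of pairwise orthogonal events. I would compute
\[
q(\bigoplus_{j \in J} A_{j}) \: = \: \sum_{i \in I} w_{i}\, p_{i}(\bigoplus_{j \in J} A_{j}) \: = \: \sum_{i \in I} w_{i} \sum_{j \in J} p_{i}(A_{j}),
\]
using the third condition for each $p_{i}$, and then interchange the two summations to obtain $\sum_{j \in J} \sum_{i \in I} w_{i}\, p_{i}(A_{j}) = \sum_{j \in J} q(A_{j})$. The interchange is legitimate because every summand $w_{i}\, p_{i}(A_{j})$ is non-negative, so the double series has a well-defined value in $[0 , +\infty]$ that does not depend on the order of summation; this is essentially the only point at which any care is required.

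For the fourth condition, fix events $A$ and $B$. Applying the fourth condition of Definition~\ref{def:*-prop} to each $p_{i}$ and forming the convex combination gives
\[
q(A) \: = \: \sum_{i \in I} w_{i}\, p_{i}(A) \: \leq \: \sum_{i \in I} w_{i} \Bigl( p_{i}(B) + 1 / 2 \, \sqrt{1 - s(A , B)} + (1 - s(A , B)) \Bigr),
\]
and since $\sum_{i \in I} w_{i} = 1$ the constant term $1 / 2 \, \sqrt{1 - s(A , B)} + (1 - s(A , B))$ factors out of the sum, leaving $q(A) \leq q(B) + 1 / 2 \, \sqrt{1 - s(A , B)} + (1 - s(A , B))$, which is exactly the desired inequality.

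The main obstacle here is bookkeeping rather than mathematics: one must confirm that the rearrangement of the non-negative double series in the third condition is valid when both $I$ and $J$ are infinite, and observe that no difficulty with the value $+\infty$ can arise, since applying the third condition to $A \oplus A^{\perp} = \Omega$ (Lemma~\ref{le:sum}, with $A$ and $A^{\perp}$ orthogonal events) forces $p_{i}(A) \leq 1$ for every event $A$, whence $q(A) \in [0 , 1]$ as well. No property of the underlying SP-structure is used beyond what is already encapsulated in Definition~\ref{def:*-prop}.
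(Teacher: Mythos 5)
Your proof is correct, and it takes the only natural route: the paper simply asserts the lemma as clear (no proof is given), and your direct verification of the four conditions of Definition~\ref{def:*-prop} — including the Tonelli-style interchange of the non-negative double series and the observation via \mbox{$A \oplus A^{\perp} = \Omega$} that every \mbox{$p_{i}(A) \leq 1$} — supplies exactly the bookkeeping the paper leaves implicit.
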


\section{Pure and mixed states} \label{sec:pureandmixed}
In Quantum Physics pure states have a dual aspect: they are points of the
sample space, i.e., elements of the subspaces
representing quantic propositions, but they also attach probabilities to
points and subspaces (the {\em transition probability}).
This simply generalizes the fact that a point $x$ in
the sample space can be identified, in Kolmogorov's setting,
with the probability distribution that gives probability one to all events that
contain $x$ and probability zero to all other events. Probabilities attached
to points in the sample space are called {\em pure states} in Quantum Physics.

\begin{theorem} \label{the:px}
Assume \mbox{$\langle \Omega , s \rangle$} is an SP-structure, and
\cF\ is a $\sigma$-* field on \mbox{$\langle \Omega , s \rangle$}.
Let \mbox{$x \in \Omega$} be a point in the sample space.
One may define a *-probability $p_{x}$ by:
\mbox{$p_{x}(B) = s(x, B) = \sum_{y \in A} s(x , y)$} for any event $B$ and
any basis $A$ for $B$. Such probabilities are called {\em pure states} and
the set of pure states will be denoted by $P(\Omega)$.
Convex combinations of pure states are called mixed states. The set
of mixed states will be denoted \mbox{$M(\Omega)$}. We shall represent
mixed states as convex combinations of points of the sample space:
\mbox{$p = \sum_{i \in I} r_{i} \, x_{i}$} for non-negative real numbers
$r_{i}$ such that \mbox{$\sum_{i \in I} r_{i} = 1$} and
\mbox{$x_{i} \in \Omega$} for \mbox{$i \in I$}.
\end{theorem}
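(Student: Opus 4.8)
The plan is to verify the four clauses of Definition~\ref{def:*-prop} for $p_{x}$, and then to obtain the remaining assertions from Lemma~\ref{le:mixed}. Two preliminaries come first. Since $0 \leq s(x,B) \leq 1$ for every subspace $B$ (Section~\ref{sec:general}), $p_{x}$ genuinely maps \cF\ into $[0,+\infty]$; and the number $\sum_{y \in A} s(x,y)$ is the same for every basis $A$ of the subspace $B$ --- this is part of the standard SP-structure framework, already presupposed by Property~\ref{proj}, and I would quote it from~\cite{SP:IJTP} --- so $p_{x}(B) = s(x,B)$ is well defined.

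Clauses 1 and 2 are immediate: the empty ortho-set is a basis for $\emptyset$, so $p_{x}(\emptyset) = 0$; and if $A$ is a basis for $\Omega$, then $\Omega = \{y \mid s(y,A) = 1\}$, and $x \in \Omega$ gives $s(x,A) = 1$, i.e.\ $p_{x}(\Omega) = 1$. For clause 3 I would take pairwise orthogonal events $A_{i}$, $i \in I$ with $I$ finite or countable, pick a basis $B_{i}$ of each $A_{i}$, and set $B = \bigcup_{i \in I} B_{i}$. Elements of distinct $B_{i}$'s are orthogonal (since $A_{i} \perp A_{j}$) and each $B_{i}$ is an ortho-set, so $B$ is an ortho-set; also the $B_{i}$ are pairwise disjoint, since a shared point would be orthogonal to itself. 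The crux is that $B$ is a basis for $\bigoplus_{i} A_{i}$: each $A_{i}$ is the least subspace containing $B_{i}$, so a subspace contains all the $A_{i}$ exactly when it contains $B$, and hence $\bigoplus_{i} A_{i}$ is the least subspace containing the ortho-set $B$ --- which, by the same span fact, is the subspace having $B$ as a basis. Granting this, $p_{x}(\bigoplus_{i} A_{i}) = \sum_{y \in B} s(x,y) = \sum_{i \in I} \sum_{y \in B_{i}} s(x,y) = \sum_{i \in I} p_{x}(A_{i})$, the middle step being a regrouping of a sum of non-negative reals.

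For clause 4 I would fix events $A$, $B$ and a point $x$ and split into cases following Definition~\ref{def:sAB}. If $x \perp A$, then $p_{x}(A) = s(x,A) = 0$ and the inequality holds because its right-hand side is non-negative. If $x \not\perp A$ but $x \perp B$, then $p_{x}(B) = 0$ and $\tau(x,A,B) = 1 - s(x,A)$, so $s(A,B) \leq \tau(x,A,B)$ gives $p_{x}(A) \leq 1 - s(A,B) \leq 1 / 2 \, \sqrt{1 - s(A,B)} + (1 - s(A,B))$. If $x \not\perp A$ and $x \not\perp B$, put $w = t(x,A)$ and $u = t(x,B)$, so that $p_{x}(A) = s(x,w)$ and $p_{x}(B) = s(x,u)$ by Property~\ref{proj}; Inequality~\ref{eq:cont} gives $s(x,w) \leq s(x,u) + 1 / 2 \, \sqrt{1 - s(w,u)} + (1 - s(w,u))$, and since $s(A,B) \leq \tau(x,A,B) = s(w,u)$ while $t \mapsto 1 / 2 \, \sqrt{1 - t} + (1 - t)$ is decreasing on $[0,1]$, replacing $s(w,u)$ by $s(A,B)$ on the right finishes the clause.

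Finally, a mixed state is by definition a convex combination $\sum_{i} r_{i} \, p_{x_{i}}$ of pure states, hence a *-probability by Lemma~\ref{le:mixed}, so $M(\Omega)$ is well defined; identifying each pure state $p_{x_{i}}$ with its point $x_{i}$ yields the representation $p = \sum_{i} r_{i} \, x_{i}$. The step I expect to be the genuine obstacle is the span fact used in clause 3 --- that an ortho-set is a basis for the subspace it generates, equivalently that $\bigoplus_{i} A_{i}$ is spanned by $\bigcup_{i} B_{i}$; reducing it via the observation that $(C^{\perp})^{\perp}$ is the least subspace containing any subset $C$ still leaves one input to be imported from~\cite{SP:IJTP}. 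Everything else is a routine unwinding of the definitions recalled above.
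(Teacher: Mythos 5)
Your proof is correct and follows essentially the same route as the paper's: clauses 1--3 via the fact that a union of bases of pairwise orthogonal events is a basis for their sum, and clause 4 via the same case split on orthogonality of $x$ to $A$ and $B$, using Property~\ref{proj}, Inequality~\ref{eq:cont} and $s(A,B) \leq \tau(x,A,B)$, with the mixed-state assertions handled by Lemma~\ref{le:mixed}. You are merely more explicit than the paper about the span fact (which the paper also just asserts, citing~\cite{SP:IJTP}) and about the monotonicity of $t \mapsto 1/2\sqrt{1-t}+(1-t)$.
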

\begin{proof}
Obviously \mbox{$p_{x}(\emptyset) = 0$} and \mbox{$p_{x}(\Omega) = 1$}.
Suppose now that \mbox{$B_{i}$}, \mbox{$i \in I$} is a family of
pairwise orthogonal events. We have \mbox{$s(x, \bigoplus_{i \in I} B_{i}) =$}
\mbox{$\sum_{i \in I} s(x, B_{i})$} since a basis for the sum is the union of
bases for the B's.
To check the last (continuity) property of probability measures,
assume, first, that $x$ is orthogonal to neither $A$ nor $B$.
By properties~\ref{proj} and~\ref{cont} of Section~\ref{sec:general}
and by Definition~\ref{def:sAB} we have:
\[
s(x, A) = s(x , t(x , A)) \leq
\]
\[
s(x , t(x, B)) + 1 /2 \sqrt{1 - s(t(x, A), t(x, B))} +
\]
\[
(1 - s(t(x, A), t(x, B))) \leq
\]
\[
s(x, B) + 1 / 2 \sqrt{1 - s(A, B)} + (1 - s(A, B)).
\]
If \mbox{$x \perp A$}, the claim is obvious.
Suppose, now that \mbox{$x \perp B$} and \mbox{$x \not \perp A$}.
We have \mbox{$s(A, B) \leq$} \mbox{$\tau(x , A , B) =$} \mbox{$1 - s(x, A)$}.
Therefore \mbox{$s(x, A) \leq 1 - s(A, B)$}.
\end{proof}

Gleason's theorem~\cite{Gleason:57} says that,
for any SP-structure defined by the rays
of a Hilbert space of dimension larger than 2,  any probability measure
on the $\sigma$*-field of all closed subspaces is a mixed state.
Notice that the result does not hold for Hilbert spaces of dimension $2$.
For physical systems of dimension $2$ there are probabilities that are not
mixed states. Nevertheless it seems that the only probabilities found useful
to study such systems in quantum physics are mixed states.
The reason may be hidden
in the preparation of quantic systems: one seems to know how to prepare
a system in any mixed state but not in any state corresponding to a probability
measure that is not mixed. Therefore one is probably justified in restricting
one's attention to mixed states.
\begin{sloppypar}
A most important remark is that the set $M(\Omega)$ of all convex
combination of pure states is not a free structure. We may well have,
for example \mbox{$1 / 2 \, p_{x} \: + \: 1 / 2 \, p_{y} =$}
\mbox{$1 / 2 \, p_{w} \: + \: 1 /2 \, p_{z}$} with
\mbox{$x \neq w$} and \mbox{$x \neq z$}.
A topic for further study is the characterization of those transformations
\mbox{$\tau : P(\Omega) \longrightarrow M(\Omega)$} for which
\mbox{$\sum_{i \in I} r_{i} p_{x_{i}} =$}
\mbox{$\sum_{j \in J} s_{j} p_{y_{j}}$} implies
\mbox{$\sum_{i \in I} r_{i} \tau(p_{x_{i}}) =$}
\mbox{$\sum_{j \in J} s_{j} \tau(p_{y_{j}})$}.
\end{sloppypar}
In classical structures, mixed states are discrete probability measures
and therefore the remainder of this paper generalizes only discrete
probability theory. A generalization of continuous probability
theory is probably necessary to understand systems with observables
that can take a continuum of values.

\section{Random variables} \label{sec:random_var}
The definition of *-random variables, generalizing Definition~\ref{def:rand}
requires some thinking.

\begin{definition} \label{def:*rand}
Let \mbox{$\langle \Omega_{i} , s_{i} \rangle$}
be SP-structures, and $\cF_{i}$ be $\sigma$-* fields on
\mbox{$\langle \Omega_{i} , s_{i} \rangle$} for \mbox{$i = 1 , 2$}.
We want a random variable to give values in $\Omega_{2}$ to elements of
$\Omega_{1}$. So it seems a random variable $X$ should be a function
\mbox{$\Omega_{1} \longrightarrow \Omega_{2}$}.
But we noticed in Section~\ref{sec:intro} that non-commuting observables cannot
be defined have values at the same sample points. Therefore we must accept
the idea that $X$ be a {\em partial} function
\mbox{$X : \Omega_{1} \longrightarrow \Omega_{2}$}.
In the classical case of Definition~\ref{def:rand}, the function is a
{\em total} function and therefore we shall require that $X$ be defined on some
basis for $\Omega_{1}$. In the classical case $\Omega_{1}$ is the only basis
and therefore $X$ must be total.
We, then, as usual, require that the inverse image by $X$ of any element
of $\cF_{2}$ be an element of $\cF_{1}$.
Guided by the fact that, in the classical case, if $A$, $B$ are {\em disjoint}
elements of $\Omega_{2}$, their inverse images \mbox{$X^{-1}(A)$}
and \mbox{$X^{-1}(B)$} are disjoint, we require that if
\mbox{$A , B \in \cF_{2}$} and \mbox{$A \perp B$}, we have
\mbox{$X^{-1}(A) \perp X^{-1}(B)$}.
\end{definition}

Real random variables are important enough to justify a specialization
of Definition~\ref{def:realrand}
\begin{definition} \label{def:realrand}
Let \mbox{$\langle \Omega ,  s \rangle$}
be an SP-structure, and \cF\ a $\sigma$-* field on
\mbox{$\langle \Omega , s \rangle$}.
A real random variable $X$ is a partial function
\mbox{$X : \Omega \longrightarrow \cR$} that is defined on some
basis for $\Omega$ and such that the inverse image by $X$ of any
Lebesgue-measurable subset of \cR\ is an element of $\cF$ and such that
the inverse images of any two disjoint such subsets are orthogonal
elements of $\cF$.
\end{definition}

Note that Definition~\ref{def:realrand} ensures that
the set of points of the sample space $\Omega$ on which a random variable $X$
is defined is a set of
pairwise orthogonal subspaces (generalizing eigensubspaces)
whose sum is $\Omega$.

A real random variable is a partial function, but it defines a total function:
its expected value in each state. There is no problem in
considering that expected values of non-commuting observables are both defined
at the same time. This total function can be even defined on mixed states.

\begin{definition} \label{def:expect}
Let $X$ be a real random variable as above and suppose it takes only a
countable set of values: $r_{i}$ for \mbox{$i \in I$}.
Let \mbox{$p \in M(\Omega)$} be any mixed state.
We define $\hat{X}(p)$ as
\mbox{$\sum_{i \in I} r_{i} \, p(X^{-1}(r_{i}))$}.
\end{definition}

\begin{theorem} \label{the:expect}
Let $X$ be a random variable as in Definition~\ref{def:expect}.
Let \mbox{$x \in \Omega$} and assume \mbox{$B =$}
\mbox{$\{ b_{i} \mid i \in I\}$} is a basis for $\Omega$ on which $X$ is
defined. Then \mbox{$\hat{X}(p_{x}) =$}
\mbox{$\sum_{i \in I} X(i) \, s(x , b_{i})$}.
\end{theorem}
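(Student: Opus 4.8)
The plan is to unfold the definitions until the statement becomes the additivity of the pure state $p_x$ over an orthogonal basis. By Definition~\ref{def:expect}, $\hat{X}(p_x) = \sum_i r_i\, p_x(X^{-1}(r_i))$, the sum running over the countably many values $r_i$ assumed by $X$; and by Theorem~\ref{the:px}, $p_x(X^{-1}(r_i)) = s(x, X^{-1}(r_i)) = \sum_{b \in A} s(x,b)$ for \emph{any} basis $A$ of the event $X^{-1}(r_i)$ (which is an event by Definition~\ref{def:realrand}). So the whole content is to produce, inside the given basis $B$, a basis for each eigensubspace $X^{-1}(r_i)$.

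First I would set $B_i = \{\, b \in B \mid X(b) = r_i \,\}$. Since $X$ is defined on all of $B$, the family $\{B_i\}_i$ partitions $B$, and each $B_i$ is an ortho-set because $B$ is. Then I would check that $B_i$ is a basis for $X^{-1}(r_i)$, i.e.\ $X^{-1}(r_i) = \{\, z \mid s(z, B_i) = 1 \,\}$. The inclusion ``$\supseteq$'' is routine: $B_i \subseteq X^{-1}(r_i)$ and $X^{-1}(r_i)$ is a subspace, so it contains the subspace with basis $B_i$, namely $\{\, z \mid s(z,B_i)=1 \,\}$ (the least subspace including the ortho-set $B_i$). The inclusion ``$\subseteq$'' is where the real work is, and where the defining orthogonality of a random variable gets used: if $z \in X^{-1}(r_i) \subseteq \Omega$ then $s(z, B) = 1$ because $B$ is a basis of $\Omega$, and for each $b \in B \setminus B_i$ we have $X(b) = r_j$ with $j \neq i$, so $b \in X^{-1}(r_j)$; since $\{r_i\}$ and $\{r_j\}$ are disjoint Lebesgue-measurable sets, Definition~\ref{def:realrand} gives $X^{-1}(r_i) \perp X^{-1}(r_j)$, hence $s(z,b) = 0$. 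Therefore $s(z, B_i) = s(z, B) = 1$, i.e.\ $z \in \{\, w \mid s(w,B_i) = 1 \,\}$.

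Given this, Theorem~\ref{the:px} yields $p_x(X^{-1}(r_i)) = \sum_{b \in B_i} s(x,b)$, and substituting back,
\[
\hat{X}(p_x) \;=\; \sum_i r_i \sum_{b \in B_i} s(x,b) \;=\; \sum_i \sum_{b \in B_i} X(b)\, s(x,b) \;=\; \sum_{b \in B} X(b)\, s(x,b),
\]
the last step holding because $\{B_i\}_i$ partitions $B$; writing $B = \{ b_i \mid i \in I \}$ this is the asserted formula. I expect the ``$B_i$ is a basis for $X^{-1}(r_i)$'' step — specifically the vanishing of the cross terms in the ``$\subseteq$'' direction — to be the only real obstacle, and the point worth stressing conceptually: it is exactly the orthogonality hypothesis deliberately placed in the definition of a (real) random variable that makes the off-eigenvalue terms disappear. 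If one prefers to bypass even the ``$\supseteq$'' bookkeeping, one can argue probabilistically instead: one has $X^{-1}(r_i) \subseteq S_i := \{\, z \mid s(z,B_i)=1 \,\}$, and both $\{X^{-1}(r_i)\}_i$ and $\{S_i\}_i$ are orthogonal families whose sums contain $B$ and hence equal $\Omega$ (by Lemma~\ref{le:sum}, a subspace containing a basis of $\Omega$ is $\Omega$), so $\sum_i p_x(X^{-1}(r_i)) = p_x(\Omega) = \sum_i p_x(S_i)$ with $p_x(X^{-1}(r_i)) \leq p_x(S_i)$ term by term, which forces equality in each term. Everything else is routine manipulation of absolutely convergent sums of non-negative reals.
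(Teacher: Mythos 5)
Your proof is correct and follows essentially the same route as the paper: split the basis $B$ by the values of $X$, identify the span of each block with the corresponding level set $X^{-1}(r_i)$, and apply the pure-state formula $p_x(C)=\sum_{b} s(x,b)$ over a basis of $C$. The only difference is that the paper simply asserts that $\bigoplus_{i\in J(a)} b_i = X^{-1}(a)$, whereas you actually prove both inclusions (using the orthogonality clause of Definition~\ref{def:realrand} to kill the cross terms), which fills in the step the paper leaves implicit.
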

\begin{proof}
For any \mbox{$a \in \cR$}, let \mbox{$J(a) \subseteq I$} be the
set of indexes $i$ for which \mbox{$X(i) = a$}.
The subspace \mbox{$\oplus_{i \in J} b_{i}$} spanned by the corresponding
basis elements is equal to \mbox{$X^{-1}(a)$}, and
\mbox{$p_{x}(X^{-1}(a)) =$} \mbox{$s(x , \oplus_{i \in J} b_{i}) =$}
\mbox{$\sum_{i \in J} s(x , b_{i})$}.
\end{proof}

\section{Future work} \label{sec:future}
In Quantum Physics, operators, and particularly
self-adjoint operators, play a central role.
Operators can be composed and their commutation properties represent
important physical information. One should try to reflect this
{\em transformational} aspect into our present framework, in terms of
properties of *-random variables. We hope to be able to characterize
classical Kolmogorov's probability theory as the special case
of *-probabilities in which random variables commute.

\section*{Acknowledgements}
I want to thank Itamar Pitowsky: a lecture of his started me thinking on
the relation of my previous work to Quantum Probability
and his help and advice are gratefully acknowledged and Benjamin Weiss for
his help.

\appendix
\section{SP-structures} \label{app:SP}
We recall here the essentials of the definition of SP-structures as presented
in~\cite{SP:IJTP}, with minimal explanations. The reader is referred
to~\cite{SP:IJTP} for a gentle complete introduction.
\begin{property}[Symmetry] \label{prop:symmetry}
For any \mbox{$x , y \in \Omega$}, \mbox{$s(y, x) =$} \mbox{$s(x, y)$}.
\end{property}
Symmetry is an experimentally verifiable and fundamental
property of Quantum Mechanics, see,
e.g., the Law of Reciprocity in~\cite{Peres:QuantumTheory}, p. 35.
\begin{property}[Non-negativity] \label{prop:non-negativity}
For any \mbox{$x , y \in \Omega$}, \mbox{$s(x, y) \geq 0$}.
\end{property}
\begin{property}[Boundedness] \label{prop:boundedness}
For any state \mbox{$x \in \Omega$} and any ortho-set $A$,
\mbox{$s(x, A) \eqdef \sum_{a \in A} s(x , a) \leq 1$}.
\end{property}
The next property we want to consider deals with orthogonal projections.
\begin{property}[O-Projection] \label{prop:Projection}
Suppose \mbox{$x \in \Omega$} is a state and \mbox{$A \subseteq \Omega$}
is an ortho-set such that \mbox{$s(x, A) < 1$}.
Then there exists a state \mbox{$y \in \Omega$} with the following properties:
\begin{enumerate}
\item \label{perpA} \mbox{$y \perp A$}, i.e., \mbox{$s(y, A) = 0$}, i.e.,
\mbox{$A \cup \{ y \}$} is an ortho-set, and
\item \label{pxAy} \mbox{$s(x, A) + s(x, y) = 1$}.
\end{enumerate}
\end{property}
O-Projection should remind the reader of the Gram-Schmidt process.
Physically, the ortho-set $A$ represents certain values of a given observable
and therefore can be interpreted as a test: is the state $x$ in $A$ or not.
If \mbox{$s(x, A) <$} $1$ the answer to the question above may, with a certain
``probability'' be ``no''. If the answer is indeed ``no'' the system is left
in a state $y$ that satisfies the three conditions above.
The scalar product
can be seen to satisfy those conditions, when $y$ is the projection of $x$
on the subspace $A^{\perp}$ orthogonal to $A$. In a classical system,
 \mbox{$s(x, A) <$} $1$ implies \mbox{$s(x, A) =$} $0$ and we can take
\mbox{$y =$} $x$.
\begin{definition} \label{def:subspace}
If $A$ is an ortho-set, the subspace \mbox{$\bar{A} \subseteq \Omega$}
generated by $A$ is defined by:
\mbox{$\bar{A} \: = \:$}
\mbox{$\{x \in \Omega \mid s(x, A) = 1 \}$}.
The ortho-set $A$ is said to be a basis for $\bar{A}$.
A {\em basis} is a basis for $\Omega$.
A subspace is a set of states \mbox{$X \subseteq \Omega$}
such that there exists
some ortho-set $A$ such that \mbox{$X = \bar{A}$}.
\end{definition}
Our next defining property for SP-structure is a factorization property.
\begin{property}[Factorization] \label{prop:Factorization}
Let $A$ be an ortho-set and $x$ an arbitrary state.
If \mbox{$y, z \in \bar{A}$} and \mbox{$s(x, y) = s(x, A)$}, then
\mbox{$s(x, z) = s(x, y) \, s(y, z)$}.
\end{property}
Factorization implies that $s(x, A)$ is the maximum of all $s(x, y)$ for
\mbox{$y \in \bar{A}$} and that every such $s(x, y)$ can be factored out
through the state taking this maximum.
Factorization has been described in
Theorem~1 of~\cite{Qsuperp:IJTP}.
The meaning of Factorization, for Physics, is that, if one knows that
in state $y$ some observable $A$ has a specific value,
then the probability of a transition from $x$ to $y$ is the product
of the probability of measuring this specific value (in $x$)
times the transition probability from the state
obtained after the measurement to $y$.
Factorization seems to be a logical requirement relating tests
to two propositions one of which entails the other: if $A$ entails $B$, testing
for $A$ may be done by testing first for $B$ and then for $A$.

In~\cite{SP:IJTP} a last property is presented that is shown to imply
Equation~\ref{eq:cont} of Section~\ref{sec:general}. Since, in this paper,
we only need Equation~\ref{eq:cont}, we shall not present this property here.

\begin{definition} \label{def:equivalence}
Any two states \mbox{$x, y \in \Omega$} are said to be {\em equivalent}, and
we write \mbox{$x \sim y$} iff for any \mbox{$z \in \Omega$}, one has:
\mbox{$s(x, z) = s(y, z)$}. An SP-structure is said to be {\em standard} if
any two equivalent states are equal.
\end{definition}

\bibliographystyle{plain}

\end{document}